\documentclass[conference]{IEEEtran}

\usepackage{cite}
\usepackage{amsmath,amssymb,amsfonts}
\usepackage{graphicx}
\usepackage{textcomp}

\def\BibTeX{{\rm B\kern-.05em{\sc i\kern-.025em b}\kern-.08em
    T\kern-.1667em\lower.7ex\hbox{E}\kern-.125emX}}

\usepackage{algorithm}
\usepackage{algpseudocode}
\usepackage{xcolor}
\usepackage{amsthm}
\usepackage{hyperref}
\usepackage{graphics}
\usepackage{braket}
\newtheorem{theorem}{Theorem}

\newtheorem{definition}{Definition}

\newtheorem{assumption}{Assumption}
\newcommand{\etal}{\textit{et al. }}

\usepackage{umoline}

\usepackage{soul}

\DeclareMathOperator{\Tr}{Tr}
\begin{document}
\bstctlcite{IEEEexample:BSTcontrol}

\title{
	Guiding Agents of Quantum Games to Equilibrium using Matrix Exponential Fixed-Point Iteration
}

\author{
	\IEEEauthorblockN{
		\uppercase{Alireza~Habibi}\IEEEauthorrefmark{1}\IEEEauthorrefmark{2}, 
		\uppercase{Luis F. Abanto-Leon}\IEEEauthorrefmark{2}, 
		\uppercase{Setareh~Maghsudi}\IEEEauthorrefmark{2}\IEEEauthorrefmark{3}
	}
	\IEEEauthorblockA{
		\IEEEauthorrefmark{2}Faculty of Electrical Engineering and Information Technology, Ruhr University Bochum, Bochum, Germany, 
	}
	\IEEEauthorblockA{\IEEEauthorrefmark{3} Faculty of Computer Science, Ruhr University Bochum, Bochum, German
	} 
	\IEEEauthorblockA{\IEEEauthorrefmark{1}Corresponding author: Alireza~Habibi (email: alireza.habibi@ruhr-uni-bochum.de)}
	
}

\maketitle

\begin{abstract}
In recent years, quantum game theory has gained significant attention as a framework for studying decision-making in multi-agent systems using quantum principles.
However, computing equilibrium strategies is challenging because the dimension of the joint Hilbert space grows as the product of the players' local dimensions.
In this paper, we consider an extended Gutoski-Watrous (EGW) game in which each player's quantum strategy is represented by a local density matrix.
We derive tensor-contraction expressions for the payoff functions and their gradients, thereby avoiding the explicit construction of the full joint density matrix and its computationally expensive multiplication by the payoff operators.
Building on the resulting effective Hamiltonians, we propose the \textit{Matrix Exponential Fixed-Point Iteration with Annealing (MEFPIA)} algorithm to search for equilibrium points in EGW games.
We compare MEFPIA with the Matrix Multiplicative Weights Update (MMWU) algorithm in terms of convergence.
For the tested instances and parameter settings, both algorithms approach the same strategy profiles and payoffs, while MEFPIA achieves lower relative error in fewer iterations.
These results indicate that MEFPIA is a promising numerical method for equilibrium search in multi-agent quantum games.
Our findings provide important insights into the quantum game theory’s potential for addressing complex decision-making processes, as well as opening up new paths for future research and exploration in multi-agent quantum systems.
\end{abstract}

\section{Introduction}
\label{sec:introduction}
Finding
the optimal decisions and determining how to reach them play an important role in different fields like economics, political science, management, and artificial intelligence. 
The decision-making process includes analyzing the available information, considering possible outcomes, evaluating the preferences of the referees, selecting an optimal decision or strategy from a range of available options, and evaluating the optimal decision~\cite{edwards1954theory,camerer2004advances}.
In a pure strategy, the player selects only one action, whereas in a mixed strategy, the player assigns probabilities to multiple actions and chooses among them randomly.

Quantum game theory extends classical game theory by including quantum principles, such as superposition, interference, and entanglement, into the strategic interactions among multiple agents.
As early examples, Meyer used unitary operators as quantum strategies to create a superposition. The superposition allows the quantum system to be in multiple states at the same time until it is measured~\cite{meyer1999quantum}.
He showed that a player using quantum strategies can outperform an opponent restricted to classical strategies in games such as the quantum penny-flip game. 
Afterward, Eisert~\etal investigated the role of entanglement in quantum game theory~\cite{eisert1999quantum}.
They proposed the Eisert-Wilkens-Lewenstein (EWL) quantum Prisoner's Dilemma and showed that, in the maximally entangled setting, quantum strategies can produce a Nash equilibrium with a higher payoff than the classical mutual-defection equilibrium.
More recently, Koh~\etal studied a multiplayer quantum
volunteer's dilemma within the EWL framework and identified
symmetric Nash equilibria with higher expected payoffs than
the unique symmetric mixed-strategy equilibrium of the
corresponding classical game~\cite{koh2025volunteer}.
Hebdon and Koh subsequently investigated how these
equilibria depend on the degree of entanglement, showing
that maximal entanglement is not always
required~\cite{hebdon2026entanglement,agreda2026experimental}.
Gutoski and Watrous proposed a operator representation of quantum strategies in multi-round interactions~\cite{gutoski2007toward}.
In the noninteractive special case, each player submits a quantum state, represented by a density matrix, to a referee who determines the players' payoffs through a joint measurement~\cite{jain2009parallel}.
In this paper, we refer to the multiplayer version of this density-matrix formulation as the extended Gutoski-Watrous (EGW) game.
The concept related to this quantum game, especially those that use semidefinite structures, can have practical applications across different fields~\cite{huang2010semidefinite,so2010probabilistic,mertikopoulos2015learning}. For example, in signal processing, they can be used in multi-agent covariance matrix optimization that enable efficient distributed optimization for tasks like beamforming and interference alignment. Similarly, in multiple–input and multiple–output (MIMO) systems, semidefinite programming (SDP) relaxations are used to solve energy efficiency and resource allocation challenges to optimize performance under interference, uncertainty, and throughput constraints~\cite{mertikopoulos2015learning,mertikopoulos2017distributed}. These approaches can also find applications in network localization, control systems, and cognitive radio networks~\cite{huang2010semidefinite,yu2004iterative,wang2022cooperative,scutari2009competitive}.

The density-matrix formulation of quantum games connects quantum information with multi-agent optimization.
To find the optimal strategies in a quantum game theory, quantum learning becomes important. It helps players to modify their strategies based on the payoffs and interactions with other players. 
Bostanci and Watrous showed that computing an approximate Nash equilibrium in a broad class of quantum games is PPAD-complete, as is the corresponding problem for classical games~\cite{bostanci2022quantum}.
Tsuda~\etal developed the matrix exponential gradient method for learning in the context of matrix and dictionary learning~\cite{tsuda2005matrix}.
This algorithm is reformulated as Matrix Multiplicative Weights Update (MMWU) methods that generalize classical multiplicative-weights algorithms to matrix variables~\cite{arora2012multiplicative}. 
Despite its classical nature, MMWU has been successfully adapted to density-matrix strategy spaces and widely used for learning and equilibrium computation in multiplayer quantum games~\cite{lotidis2024payoff,jain2022matrix,aaronson2018online,lotidis2023stability,chen2024adaptive,lin2023quantum}.
For two-player zero-sum noninteractive quantum games, Jain and Watrous showed that the time-averaged MMWU strategies provide an approximate equilibrium~\cite{jain2009parallel}.
Jain~\etal subsequently studied a continuous-time analogue of MMWU, called quantum replicator dynamics~\cite{jain2022matrix}.
They showed that, in this system, the total quantum relative entropy is a constant of motion and the dynamics could fail to achieve equilibrium. 
Lotidis~\etal showed that pure quantum equilibrium can be stable and attracting under the MMWU dynamic ~\cite{lotidis2023stability}.

Regardless of the advantages of MMWU, its convergence rate is not sufficiently fast. This limitation comes from its dependence on the time-averaged density matrix, which is heavily influenced by the nature of the entropy used during the development. 
In addition, computing strategy updates in this setting requires repeated evaluations of the players' payoffs and their gradients, which can become computationally expensive and memory-intensive as the joint Hilbert space grows.
This motivates numerical methods that exploit the structure of the strategy profiles and can be implemented on classical computers.
This paper aims to address these issues.

\textbf{Our contributions:}
In this paper, we proposed a numerical approach to equilibrium search in EGW quantum games. Our main contributions are:
\begin{itemize}
	\item We derive tensor-contraction expressions for the payoff functions and their gradients. These expressions avoid explicitly constructing the joint density matrix and multiplying it by the reward operators, reducing computational and memory costs compared with direct evaluation using full matrix products.
	\item We propose the \textit{Matrix Exponential Fixed-Point Iteration with Annealing (MEFPIA)} algorithm, which combines matrix-exponential updates derived from entropy-regularized best responses with a gradually decreasing temperature. All learning and optimization steps of MEFPIA are carried out on a classical computer.
	\item We evaluate MEFPIA against MMWU in quantum games involving separable and entangled measurement operators, and investigate the effect of the cooling rate. For the tested instances and parameter settings, MEFPIA achieves lower relative payoff error in fewer iterations.
	\end{itemize}

The paper is structured as follows.
For the readers with a background in game theory and limited familiarity with quantum mechanics, section~\ref{sec:quantum-basics} provide the fundamental concepts of quantum mechanics used in this paper.
In Section~\ref{sec:notations}, we introduce the notation used in quantum game theory, including that for a composite quantum system in which each player controls a local subsystem.
Section~\ref{sec:model} defines the EGW quantum game and quantum strategies, then explains how the reward function is calculated. 
In Section~\ref{sec:theoratical-results}, we discuss how to compute the trace, partial trace, and gradient of the reward function. 
Section~\ref{sec:learning} covers the MEFPIA learning algorithm. 
Section~\ref{sec:expriments} presents the experimental evaluation of MEFPIA and MMWU, comparing their convergence speed and precision. 
At last, Section~\ref{sec:conclud} holds concluding remarks.

\section{Quantum mechanic basics}
\label{sec:quantum-basics}
Quantum mechanics is a fundamental theory in science that describes the behavior of matter and energy at the atomic and subatomic levels. Unlike deterministic classical mechanics, quantum mechanics is intrinsically probabilistic, which means that some system features cannot be exactly measured. Quantum mechanics also includes phenomena such as superposition, which allows particles to exist in multiple states simultaneously.
This framework is crucial for understanding modern technologies such as semiconductors, lasers, and quantum computers~\cite{nielsen2010quantum}.

In this paper, we use the Dirac notation, sometimes referred to as the bra-ket notation in quantum mechanics, as a powerful framework for representing vectors and operators in the Hilbert space ($\mathcal{H}$)~\cite{nielsen2010quantum}.
A pure quantum state is denoted either by the Greek letter $\psi$ or by a ket $\ket{\psi} \in \mathcal{H}$. The dual vector, or conjugate transpose, is represented by the bra $\bra{\psi} = (\ket{\psi})^\dag$. 
The inner product between two quantum states is written as $\braket{\phi | \psi}$. A quantum state is always normalized so that $\braket{\psi | \psi} = 1$. 
Sometimes, we are unable to represent the quantum state of a system using a single quantum state. In such cases, we can describe the probabilities of different quantum states using a density matrix, expressed as
\begin{align}
	\rho = \sum_i \alpha_i \ket{\psi_i}\bra{\psi_i},
\end{align}
where $\alpha_i$ is the probability of finding the system in the quantum state $\ket{\psi_i}$,with $\alpha_i\geq0$ and $\sum_i\alpha_i=1$.
The density matrix has the properties as
\begin{align}
	\label{eq:dm-conditions}
	\rho=\rho^\dagger,\quad
	\rho\succeq0,\quad
	\Tr(\rho)=1,\quad
	0<\Tr(\rho^2)\leq1.
\end{align}
The quantum state is considered pure if $\Tr(\rho^2) = 1$, which means that we can describe the state of the system using only one quantum state called pure quantum state; otherwise, it is a mixed state. The corresponding density matrix is then termed as a pure density matrix or a mixed density matrix, respectively.

Quantum mechanics uses operators ($\mathbf{O}: \mathcal{H} \to \mathcal{H}$) to express physical observables and state transformations. Hermitian operators ($\mathbf{O}^\dag = \mathbf{O}$) represent measurable quantities with real eigenvalues, whereas unitary operators ($\mathbf{U}^\dag \mathbf{U} = \mathbf{U} \mathbf{U}^\dag = \mathbf{I}$, where $\mathbf{I}$ is the identity operator. ) represent the evolution of the quantum state. In this paper, we use bold capital letters for operators. 
The expectation value of an observable $\mathbf{O}$ for the pure quantum state $\ket{\psi}$ is expressed as
\begin{align}
	\label{eq:epv}
	\braket{\mathbf{O}}_{\psi} = 
	\braket{ \psi | \mathbf{O} | \psi}.
\end{align}
This expression represents the statistical average of the measurement outcomes associated with the observable $\mathbf{O}$ when the system is in the state $\ket{\psi}$. 
For either a pure or mixed quantum state, the expectation value of an observable $\mathbf{O}$ for the density matrix $\rho$ is given by
\begin{align}
	\braket{\mathbf{O}}_{\rho} = \Tr(\mathbf{O}\rho).
\end{align}

\textbf{Quantum Measurement:}
In order to extract information from quantum states, it is necessary to have a set of measurement outcomes that can be used to allocate rewards to each player. In this paper, we use a positive operator-valued measure (POVM) to provide a general and flexible description of quantum measurements, particularly in scenarios that involve partial or nonorthogonal measurements.
Consider a finite set of measurement outcomes $\Omega$. To determine the probability of measuring a particular outcome $\omega \in \Omega$, we define a positive semidefinite operator ${{\mathbf{P}_\omega}: \mathcal{H} \to \mathcal{H}}$, subject to the condition ${\sum_{\omega}\mathbf{P}_{\omega} = \mathbf{I}}$. In this paper, we refer to $\mathbf{P}_\omega$ as the outcome operator corresponding to the measurement of the outcome $\omega$.
The probability of measuring the outcome $\omega$, given that the environment is in a pure quantum state $\ket{\psi}$, is 
\begin{align}
	P_{\omega}(\psi)=\Braket{\psi|\mathbf{P}_{\omega}|\psi}.
\end{align}
with $\sum_{\omega} P_{\omega}(\psi) =1$.
When the system is in a mixed state described by the density matrix $\rho$, the probability of measuring the outcome $\omega$ is given by $P_{\omega}(\rho) = \Tr(\mathbf{P}_{\omega} \rho)$. 
There are several methods to generate outcome operators~\cite{nielsen2010quantum,busch1996standard}. We explain a simple method using orthonormal vectors to generate POVM outcome operators in Appendix~\ref{append:povm_generate}.

\textbf{Von Neumann entropy:} 
This quantum entropy is defined as
\begin{align}
	S(\rho) = -\Tr (\rho \log \rho),
\end{align}
and measures the uncertainty or mixedness in a quantum system. For a quantum system with pure states (states without mixedness ), the Von Neumann entropy is zero, while maximally mixed states have the maximum possible entropy.

\textbf{Entangled and separable states:}
In quantum mechanics, entanglement arises when we have composite systems (of subsystems) that can not be separated. In this case, subsystems are correlated, meaning that the state of one subsystem cannot be described independently of the state of the others.
A separable system can be expressed as
\begin{align}
	\label{eq:separable-state-ent}
	\rho^{(\text{s})} = \sum_i p_i \, \rho^{(A)}_i \otimes \rho^{(B)}_i,
\end{align}
where $p_i$ are probabilities such that $\sum_i p_i = 1$ and 
the symbol $\otimes$ denotes the tensor product of quantum states.
Here, $\rho^{(A)}$ and $\rho^{(B)}$ are the density matrices of the subsystems $A$ and $B$, respectively. 
We use the superscript $(\text{s})$ in the tensor product space to insist on separable operators or density matrix.
Entanglement emerges when the density matrix $\rho$ cannot be decomposed in this way (Eq.~\eqref{eq:separable-state-ent}) which indicates a non-classical correlation between subsystems.

\textbf{Gibbs Distribution and Density Matrix:}  
Consider a quantum system at temperature $T$. When the total energy of the system is described by the Hamiltonian operator $\mathbf{H}$, the equilibrium state of the quantum system is represented by the Gibbs distribution. The Gibbs distribution is related to the density matrix as~\cite{trushechkin2022open}
\begin{align}
	\label{eq:gibs-density}
	\rho = \frac{e^{- \frac{\mathbf{H}}{k_B T}}}{\Tr\left(e^{-\frac{\mathbf{H}}{k_B T}}\right)},
\end{align}
where $k_B$ is Boltzmann's constant. In this paper, we set $k_B = 1$ for simplicity. In this formulation, the density matrix $\rho$ encodes the probabilities that the system is in different energy eigenstates.  
The Hamiltonian $\mathbf{H}$ determines the system's energy levels, while the parameter $T$ controls the distribution of these probabilities. When $T = 0$, the system settles into its ground state.
The ground state is the lowest-energy eigenstate. As $T \to \infty$, the density matrix becomes fully mixed, which reflects a uniform distribution over all states and energies.

\section{Notations}
\label{sec:notations}
In this paper, for a EGW quantum game with $N$ players where each player has access to their own subsystem, we use the following notation.
Consider a finite-dimensional complex Hilbert space $\mathcal{H} \cong \mathbb{C}^d$, where $d$ is its dimension\cite{nielsen2010quantum}. 
In a $N$-player quantum game, each player $i$ controls a local Hilbert space $\mathcal{H}^{(i)}$ of size $d_i$.  
The joint system is described by the tensor-product space
\begin{align}
	\mathcal{H} = \bigotimes_{i=1}^N \mathcal{H}^{(i)},
	\qquad
	d = \dim(\mathcal{H}) = \prod_{i=1}^N d_i.
\end{align}
Each player's strategy is represented by a density matrix
$\rho^{(i)} \in \mathbb{C}^{d_i \times d_i}$.
We use bold capital letters for other quantum operators.
An operator $\mathbf{O}$ acting on $\mathcal{H}$ has a $d \times d$ matrix representation in a chosen product basis.
To make its subsystem structure explicit, we replace the row and column indices by multi-indices,
$m \leftrightarrow (j_1,\ldots,j_N)$ and
$n \leftrightarrow (j'_1,\ldots,j'_N)$,
where $j_k,j'_k \in \{1,\ldots,d_k\}$.
The operator entries can then be written as
$O_{j_1,\ldots,j_N,j'_1,\ldots,j'_N}$,
giving an equivalent tensor representation with $2N$ indices.

We use the notation $(\mathbf{j}) \equiv (j_1, j_2, \cdots, j_N)$, where $j_i$ is the index $j$ corresponding to the player $i$. The notation $(\mathbf{j}_{-i})$ denotes $(j_1, j_2, \cdots, j_{i-1}, j_{i+1}, \cdots, j_N)$, where the index $j_i$ for the player $i$ is omitted. The $(j'_i; \mathbf{j}_{-i}) = (j_1, \cdots, j'_i, \cdots, j_N)$ represents that $j_i$ is replaced by $j'_i$.
The sum over $\mathbf{j}$ represents the sum over all indices as
\begin{align}
	\sum_{\mathbf{j}}= \sum_{ j_1,j_2,\cdots,j_N }= \sum_{j_1}\sum_{j_2}\ldots\sum_{j_N},
\end{align}
and the sum over $\mathbf{j}_{-i}$ denotes the sum over all indices except $j_i$
\begin{align}
	\sum_{\mathbf{j}_{-i}} = \sum_{j_1}\ldots\sum_{j_{i-1}}\sum_{j_{i+1}}\ldots\sum_{j_N}.
\end{align}

In addition, we distinguish local operators from operators acting on
the joint Hilbert space.
The operator $\mathbf{O}^{(i)}\in\mathbb{C}^{d_i\times d_i}$
acts on player $i$'s local Hilbert space $\mathcal{H}^{(i)}$,
whereas $\mathbf{O}_i\in\mathbb{C}^{d\times d}$ acts on
the joint Hilbert space $\mathcal{H}$ and is associated
with player $i$.
For a separable density matrix, we have
\begin{subequations}
	\begin{align}
		\rho^{(\text{s})} &= \rho^{(1)} \otimes \rho^{(2)} \otimes \cdots \otimes \rho^{(N)},
		\\
		\rho_i^{(\text{s})} &= \mathbf{I}^{(1)} \otimes \cdots \otimes \rho^{(i)} \otimes \cdots \otimes \mathbf{I}^{(N)},
		\label{eq:equantum-op-2}
		\\
		\rho_{-i}^{(\text{s})} &= \rho^{(1)} \otimes \cdots \otimes \mathbf{I}^{(i)} \otimes \cdots \otimes \rho^{(N)},
		\\
		(\rho^{'(i)} ;\rho_{-i}^{(\text{s})} ) &= (\rho^{(1)} \otimes \cdots \otimes \rho^{'(i)} \otimes \cdots \otimes \rho^{(N)}),
	\end{align}
\end{subequations}
where $\mathbf{I}^{(i)}$ is the identity matrix with the same dimension as the density matrix $\rho^{(i)}$.
For the complex conjugate of a complex number $a$, we use the notation $\bar{a}$.

A pure quantum state $\ket{\Psi}$ of an $N$-player system with $N$ subsystems can be expressed as
\begin{align}
	\ket{\Psi} = \sum_{\mathbf{j}} c_{\mathbf{j}} \ket{\mathbf{j}},
\end{align}
where $\mathbf{j}=(j_1,\ldots,j_N)$ is a composite index and
$\ket{\mathbf{j}}=\ket{j_1 j_2 \ldots j_N}
= \ket{j_1} \otimes \ket{j_2} \otimes \ldots \otimes \ket{j_N}$
is a product basis vector, with $j_i \in \{0,\ldots,d_i-1\}$.
The local basis vectors are orthonormal.
The complex probability amplitudes
$c_{\mathbf{j}}=c_{j_1,j_2,\ldots,j_N}$ form an order-$N$ tensor.
The normalization condition $\Braket{\Psi|\Psi}=1$ therefore requires
$\sum_{\mathbf{j}} |c_{\mathbf{j}}|^2=1$.
Similarly, for a system with $N$ subsystem, the general form of quantum operator is
\begin{align}
	\label{eq:nplayer-operator}
	\mathbf{A} =
	\sum\limits_{\mathbf{j},\mathbf{j}'}
	A_{\mathbf{j},\mathbf{j}'} 
	\ket{ \mathbf{j}}
	\bra{ \mathbf{j}'},
\end{align}
where $A$ denotes a Tensor rank $2N$.

\section{Model}
\label{sec:model}
In a EGW quantum game, players apply their strategies to their respective subsystems, and the reward is calculated based on the strategy profile and each player's reward operator.
A N-player EGW quantum game is a tuple $\mathcal{Q}_{\text{EGW}} = \langle \mathcal{N}, \mathcal{H}, \mathcal{S}, r \rangle$ defined as follows~\cite{gutoski2007toward,lotidis2024payoff} 
\begin{itemize}
	\item $\mathcal{N} = \{1, 2, \ldots, N\}$: A finite set of $N$ players.
	\item $\mathcal{H} = \mathcal{H}^{(1)} \otimes \mathcal{H}^{(2)} \otimes \ldots \otimes \mathcal{H}^{(N)}$: A Hilbert space associated with the game. Each player $i \in \mathcal{N}$ has access to a complex Hilbert space $\mathcal{H}_i \cong \mathbb{C}^{d_i}$, where $d_i$ represents the dimension of  player~$i$'s subsystem.

	\item $\mathcal{S}^{(i)}$: A mixed quantum strategy applied by each player $i\in \mathcal{N}$ is represented by a mixed density matrix ${\rho^{(i)} \in \mathbb{C}^{d_i} \times \mathbb{C}^{d_i}}$, satisfying the conditions mentioned in Eq.~\eqref{eq:dm-conditions}. A pure density matrix (${ \Tr({\rho^{(i)}}^2) = 1 }$) is a pure quantum strategy.  
	
	\item   $\mathcal{S}=\mathcal{S}^{(1)} \otimes \cdot \otimes \mathcal{S}^{(N)}$: The joint strategy space.
	A set of quantum strategies $\mathcal{S}$: 
	The strategy profile applied by all players is a separable density matrix expressed as
	\begin{align}
		\rho^{\text(\text{s})} = \rho^{(1)} \otimes \rho^{(2)} \otimes \ldots \otimes \rho^{(N)}.
	\end{align}
	
	\item The reward function $r$: Each player $i \in \mathcal{N}$ receives a real value reward, determined by an individual reward function $r_i: \rho \to \mathbb{R}$. The reward is determined by the outcomes of the POVM measurement process using a reward operator $\mathbf{R}_{i}: \mathcal{H} \to \mathcal{H}$. We assign a numerical outcome reward $R_i: \Omega \to \mathbb{R}$ to each outcome for player $i$, where $\Omega$ is a finite set of measurement outcomes. The reward operator can be represented as
	\begin{align}
		\label{eq:reward-op-main}
		\mathbf{R}_{i} = \sum_{\omega\in \Omega} R_i(\omega ) \mathbf{P}_\omega.
	\end{align}
	The expectation value of the reward for player $i$ is
	\begin{align}
		\label{eq:reward-function-density}
		r_i(\rho^{(\text{s})}) = \Braket{\mathbf{R}_{i}}_{\rho^{(\text{s})}} 
		= \Tr(\mathbf{R}_{i} \rho^{(\text{s})}).
	\end{align}
\end{itemize}

\textbf{Best Response (BR):}
Given the other players' strategy profile
$\rho^{(\text{s})}_{-i}$, the best-response set of player $i$ is
\begin{align}
	\operatorname{BR}_i\left(\rho^{(\text{s})}_{-i}\right)
	=
	\underset{\rho^{(i)}}{\arg\max}
	\; r_i\left(\rho^{(i)};\rho^{(\text{s})}_{-i}\right),
\end{align}
which maximize the reward of player $i$.

\textbf{Nash Equilibrium:}
A strategy profile $\rho^{(\text{s})*}=\bigotimes_{i=1}^{N}\rho^{(i)*}$
is a Nash equilibrium if no player can increase its payoff
by changing its own strategy while the other players'
strategies remain fixed,
\begin{align}
	r_i\left( \rho^{(i)*}; \rho^{(\text{s})*}_{-i}\right) \ge r_i\left( \rho^{(i)}; \rho^{(\text{s})*}_{-i}\right),\quad \forall i \in \mathcal{N}.
\end{align}
Equivalently, every player's equilibrium strategy is a
best response to the equilibrium strategies of the others,
\begin{align}
	\rho^{(i)*}
	\in \operatorname{BR}_i\left(\rho^{(\mathrm{s})*}_{-i}\right),
	\qquad \forall i \in \mathcal{N}.
\end{align}

Before proceeding further, we need to establish some foundational results related to the use of trace, multiplication, and derivatives in this context.
In the following section, we will outline these results.

\section{Theoretical Results}
\label{sec:theoratical-results}
The dimension of the density matrix, calculated using the Kronecker product, scales exponentially to $(\prod_i d_i )\times (\prod_i d_i) $. 
In order to calculate the reward function using Eq.~\eqref{eq:reward-function-density}, we need to multiply the quantum operator by the density matrix and calculate the trace. As a result, calculating the rewards and gradients for this task become computationally expensive and memory-intensive  with a high number of players and a large dimension of the strategy space. In the following, we simplify the task by applying the reward operator to the separable density matrix and calculating the trace using tensor contraction rules, which helps to reduce the computational cost and additional memory requirements compared with direct evaluation.
First, we begin by calculating the trace and partial trace of the quantum operator that consists of $N$ subsystems, as defined in Eq.~\eqref{eq:nplayer-operator}.
The trace of $\mathbf{A}$ over all players' subsystems and the partial trace of $\mathbf{A}$ over all players' subsystems except player $i$ are given by
\begin{subequations}
	\label{eq:trace_all}
	\begin{align}
		\Tr(\mathbf{A}) &= \sum_{\mathbf{j}} A_{\mathbf{j},\mathbf{j}},  \\
		\Tr_{-i}(\mathbf{A}) &=
		\sum_{j_i,j'_i} \Big( \sum_{\mathbf{j}_{-i}}
		A_{(j_i;\mathbf{j}_{-i}) , (j'_i;\mathbf{j}_{-i})} \Big) 
		\ket{j_i}\bra{j_i'},
	\end{align}
\end{subequations}
where $\Tr_{-i}$ denotes the partial trace over all indices except $i$.
%

% \re{lemma 1moved to appendix.... \\}
The partial trace helps us focus on a specific subsystem of a system while tracing out (or ignoring) the rest of the system. It also allows us to obtain information about a player independently of the other players. The partial trace
$\Tr_{-i}(\mathbf{A})$ is an operator and its dimension is equal to the dimension of the subsystem $i$.
One can separate the trace of the operator $\mathbf{A}$ as $\Tr(\mathbf{A}) = \Tr_i(\Tr_{-i}(\mathbf{A}))$, which means that we first perform the partial trace on every subsystem except $i$ and then take the trace over the subsystem $i$.
Next, we need to compute the multiplication of two quantum operators.
If $\mathbf{A}$ and $\mathbf{B}$ be quantum operators for a game of $N$ players, and $\mathbf{C}^{(i)}$ be a quantum operator that acts on the subsystem of the player $i$. Then, the products $\mathbf{A}  \mathbf{B}$ and $\mathbf{A} \mathbf{C}^{(i)}$ can be expressed as
\begin{subequations}
	\label{eq:mulp}
	\begin{align}
		\label{eq:tensor-proda}
		\mathbf{A} \mathbf{B} &= \sum_{\mathbf{j}, \mathbf{j'}} \Big( 
		\sum_{\mathbf{k}} A_{\mathbf{j},\mathbf{k}}B_{\mathbf{k},\mathbf{j'}}
		\Big)
		\ket{\mathbf{j}}\bra{\mathbf{j'}}, \\
		\mathbf{A} \mathbf{C}^{(i)} &= \sum_{\mathbf{j}, \mathbf{j'}}
		\Big( 
		\sum_{k_i} A_{\mathbf{j},(k_i;\mathbf{j'}_{-i})}C_{k_i,j'_i}
		\Big)
		\ket{\mathbf{j}}\bra{\mathbf{j'}}.
		\label{eq:tensor-prodb}
	\end{align}
\end{subequations}

When dealing with quantum operators, computing their traces, performing direct multiplication, and calculating trace values can be computationally expensive and time-consuming. 
As a result, we compute the reward function analytically to avoid the computational cost associated with these operations.
Using Eqs.~(\ref{eq:trace_all}) and (\ref{eq:mulp}), the reward function in Eq.~(\ref{eq:reward-function-density}) can be simplified as 
\begin{align}
	\label{eq:reward_expand}
	r_i(\rho^{(\text{s})}) = 
	\sum\limits_{\mathbf{j},\mathbf{j}'}
	R_{\mathbf{j},\mathbf{j}'}^{(i)}
	\rho_{j'_1,j_1}^{(1)} \rho_{j'_2,j_2}^{(2)} \ldots  \rho_{j'_N,j_N}^{(N)}.
\end{align}

This equation reduces the time complexity of the algorithm from $O\left(\prod_i d_i^3\right)$ to $O\left((N+1) \prod_i d_i^2\right)$. When the reward operator is sparse, the space complexity decreases from $O\left(\prod_i d_i^2\right)$ to $O\left(\sum_i d_i^2 + \sum_i k(\mathbf{R}^{(i)})\right)$, where $k(\mathbf{R}^{(i)})$ is  the number of non-zero entries in $\mathbf{R}^{(i)}$.

To develop the learning algorithm, we need to calculate the gradient of the real-valued function with respect to the complex parameters. For this purpose, we can use the following theorem.
\begin{theorem}[Wirtinger's calculus for a real-valued function, see Refs.~\cite{petersen2008matrix,haykin2002adaptive}]
	\label{theorem:Wirtinger-gradient}
	Let $f: \mathbb{C} \to \mathbb{R}$ be a real-valued function and $z$ be a complex variable. Then, the gradient of $f$ with respect to $z$, denoted by ${\nabla_z} f(z, \bar z)$, is given as
	\begin{align}
		{\nabla_z} f(z, \bar z) = 2 \frac{\partial f(z, \bar z)}{\partial \bar z}.
	\end{align}
\end{theorem}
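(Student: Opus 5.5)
The plan is to identify $\mathbb{C}$ with $\mathbb{R}^2$ by writing $z = x + \mathrm{i} y$, and to regard $f$ as a real-valued function $f(x,y)$ of two real variables. Its ordinary real gradient is then well defined. The natural complex encoding of that gradient is $\nabla_z f := \frac{\partial f}{\partial x} + \mathrm{i}\,\frac{\partial f}{\partial y}$. This is the vector $(\partial_x f, \partial_y f)$ read back as a complex number, and it is the convention under which a steepest-ascent step $z \leftarrow z + \eta \nabla_z f$ agrees with real gradient ascent on $(x,y)$. I will state this identification explicitly first, since the theorem is really a statement about how the real gradient is expressed in Wirtinger coordinates.

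Next I will recall the Wirtinger operators, defined by
\begin{align}
	\frac{\partial}{\partial z} = \frac{1}{2}\Big(\frac{\partial}{\partial x} - \mathrm{i}\frac{\partial}{\partial y}\Big), \qquad
	\frac{\partial}{\partial \bar z} = \frac{1}{2}\Big(\frac{\partial}{\partial x} + \mathrm{i}\frac{\partial}{\partial y}\Big).
\end{align}
These are obtained formally by treating $z$ and $\bar z$ as independent variables, with $x = (z+\bar z)/2$ and $y = (z-\bar z)/(2\mathrm{i})$, and applying the chain rule. Multiplying the second identity by $2$ gives $2\,\partial f/\partial \bar z = \partial_x f + \mathrm{i}\,\partial_y f = \nabla_z f$, which is the claim.

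As a consistency check, I will note that because $f$ is real-valued we have $\overline{\partial f/\partial z} = \partial f/\partial \bar z$. So $\nabla_z f$ could equally be written as $2\,\overline{\partial f/\partial z}$. I will also verify the first-order expansion
\begin{align}
	f(z+\delta) \approx f(z) + 2\,\mathrm{Re}\Big(\bar\delta\, \frac{\partial f}{\partial \bar z}\Big) = f(z) + \mathrm{Re}\big(\bar\delta\, \nabla_z f\big).
\end{align}
This confirms that $\nabla_z f$ is the Riesz representative of the differential under the real inner product $\langle a, b\rangle = \mathrm{Re}(\bar a b)$ on $\mathbb{C} \cong \mathbb{R}^2$. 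It is therefore the direction of steepest ascent, which justifies the definition. For the matrix case used later, the same argument applies entrywise to the real and imaginary parts of each entry of $\rho^{(i)}$.

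The main obstacle is not computational but definitional. The factor $2$, and the choice of $\bar z$ rather than $z$, depend entirely on the convention adopted for "gradient with respect to a complex variable", and some references omit the factor $2$. I will therefore make the identification with the real gradient in $\mathbb{R}^2$ explicit at the outset. Once that convention is fixed, the remainder of the proof is a direct application of the chain rule.
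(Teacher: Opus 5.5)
Your argument is correct and is essentially the paper's own. The paper gives no derivation, only citing Petersen--Pedersen and Haykin; those references define the complex gradient as $\partial f/\partial(\mathrm{Re}\,z) + i\,\partial f/\partial(\mathrm{Im}\,z)$ and obtain $2\,\partial f/\partial \bar z$ by exactly your chain-rule computation, with the convention and the tacit real-differentiability of $f$ made explicit as you do.
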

\begin{proof}
	For a detailed derivation, see Refs.~\cite{petersen2008matrix, haykin2002adaptive}. The factor of 2 in the gradient expression arises from Wirtinger's calculus for complex variables.
\end{proof}
\begin{assumption}
	\label{assum:zzbar}
	Since we work with a real value reward function that contains complex parameters, we employ Wirtinger derivatives. In this approach, we treat $z$ and $\bar z$ as independent variables.
\end{assumption}

Using Assumption~\ref{assum:zzbar} and Eq.~(\ref{eq:reward_expand}), we calculate the derivative of the reward function with respect to the strategy used by player $i$ as  
\begin{align}
	\label{eq:partialr}
	\frac{\partial r_i(\rho^{(\text{s})})}{\partial \bar {\rho}^{(i)}} &= 
	\Tr_{-i}
	\left(
	\mathbf{R}_i \rho^{(\text{s})}_{-i}
	\right).
\end{align}

\begin{definition}
	\label{dif:indi-reward-grad}
	The effective Hamiltonian of a EGW quantum game for player $i$ is defined as
	\begin{align}
		\mathbf{H}^{(i)} (\rho^{\text{(s)}}_{-i})=-  \frac{1}{2} {\nabla _{\rho^{(i)}} } r_i(\rho^{\text{(s)}}).
	\end{align}
\end{definition}
This definition connects the gradient of the reward function to the effective Hamiltonian of the subsystem of the player $i$. This relation calculates the effect of the strategies of other players on the subsystem $i$ in the absence of the player strategy $i$. 
We refer to it as the effective Hamiltonian, since this Hamiltonian represents only one subsystem. The dimensions of the effective Hamiltonian $\mathbf{H}^{(i)}$ and the density matrix $\rho^{(i)}$ are the same.
Using Theorem~\ref{theorem:Wirtinger-gradient}, Eq.~(\ref{eq:partialr}), and Definition~\ref{dif:indi-reward-grad}, the reward function $r_i(\rho^{\text{(s)}})$ can be written as
\begin{align}
	\label{eq:reward_energy_trace}
	r_i  (\rho^{\text{(s)}}) = -
	\Tr_i
	\left(
	\rho^{(i)}
	\mathbf{H}^{(i)}  (\rho^{\text{(s)}}_{-i})
	\right)
\end{align}
In quantum mechanics, $\Tr(\rho \mathbf{H})$ is generally referred to the average energy of the system. In the above theorem, $\Tr_i (\rho^{(i)} \mathbf{H}^{(i)} (\rho^{\text{(s)}}_{-i}))$ can be interpreted as the average energy of the subsystem $i$. Each player aims to maximize the reward or minimize the energy of their own subsystem at temperature $T$ during the learning process.
It is known that EGW quantum games always have at least one Nash equilibrium point~\cite{facchinei2003finite,scutari2010convex,lotidis2023learning}. 
Using Eq.~(\ref{eq:reward_energy_trace}), the Nash equilibrium condition can be expressed as
\begin{align}
	\Tr_i
	\left(
	\mathbf{H}^{(i)}  (\rho^{\text{(s)}*}_{-i})
	\left(
	\rho^{(i)*} - \rho^{(i)} 
	\right)
	\right) \le 0 \quad \forall i \in \mathcal{N}. 
\end{align}
This relation aims to minimize the energy and maximize the reward for all subsystems.

\section{Learning in the EGW quantum game theory}
\label{sec:learning}
In classical game theory, players are generally restricted to using a pure strategy at each decision iteration~\cite{mertikopoulos2016learning}. Even in the case of a classical mixed strategy, players only use one strategy per iteration. However, in the EGW quantum game, each player can employ a mixed quantum strategy at each iteration. The reason is that the density matrix can be expressed as a linear combination of multiple pure quantum strategies. Moreover, even in a pure quantum strategy, the superposition principle allows the strategy to be a combination of multiple actions, each associated with complex probability amplitudes in quantum states.
To find an optimal strategy profile where no player can improve their reward by unilaterally changing their strategy, we propose an optimization technique that is capable of identifying this strategy profile.

In our learning algorithm, we use the von Neumann entropy
to regularize each player's strategy update.
The Von Neumann entropy provides a measure of quantum uncertainty or mixedness in quantum states. 
Higher entropy reflects more uncertainty or mixedness, which encourages the exploration of alternative strategies.
For player $i$, we hold the other players' strategies
$\rho_{-i}^{(\mathrm{s})}$ fixed and consider the following
optimization problem:
\begin{align}
	\underset{\sigma^{(i)}}{\operatorname{minimize}}
	\quad &
	\Tr\!\left(
	\mathbf{H}^{(i)}(\rho_{-i}^{(\mathrm{s})})\sigma^{(i)}
	\right)
	-T S(\sigma^{(i)})
	\nonumber\\
	\text{subject to}\quad &
	\sigma^{(i)}\succeq 0,
	\qquad \Tr(\sigma^{(i)})=1,
	\label{eq:regularized-response}
\end{align}
where $\sigma^{(i)}\in\mathbb{C}^{d_i\times d_i}$,
$T>0$, and
$S(\sigma^{(i)})=-\Tr(\sigma^{(i)}\log\sigma^{(i)})$
is the von Neumann entropy.
The constraints ensure that $\sigma^{(i)}$ is a valid
density matrix.
The first term is the player's expected effective energy,
while the entropy term favors mixed strategies.
The temperature $T$ controls the balance between energy
minimization and entropy maximization.
The following theorem gives the solution of this problem.
\begin{theorem}[Entropy-regularized BR]
	\label{theorem:solutiontypeA}
	Given a quantum game
	$\mathcal{Q}_{\mathrm{EGW}}
	=\langle\mathcal{N},\mathcal{H},\mathcal{S},r\rangle$,
	fix the opponents' strategies
	$\rho_{-i}^{(\mathrm{s})}$ and a temperature $T>0$.
	The unique solution of
	Eq.~\eqref{eq:regularized-response}, which defines
	the entropy-regularized best response, is
	\begin{align}
		\rho_T^{(i)}
		=
		\frac{e^{-\mathbf{H}^{(i)}(\rho_{-i}^{(\mathrm{s})})/T}}
		{\Tr\!\left(	e^{-\mathbf{H}^{(i)}(\rho_{-i}^{(\mathrm{s})})/T}\right)}.
		\label{eq:MMWUA}
	\end{align}
\end{theorem}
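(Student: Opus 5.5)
The plan is to use the Gibbs variational principle: rewrite the objective in \eqref{eq:regularized-response} as a quantum relative entropy plus a constant, and then apply Klein's inequality (nonnegativity of relative entropy, with equality only for equal arguments). This gives existence, the explicit form \eqref{eq:MMWUA}, and uniqueness together, with no Lagrange multipliers needed.

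\textbf{Step 1: $\mathbf{H}^{(i)}$ is Hermitian.} Write $\mathbf{H}\equiv\mathbf{H}^{(i)}(\rho_{-i}^{(\mathrm{s})})$. By Definition~\ref{dif:indi-reward-grad}, Theorem~\ref{theorem:Wirtinger-gradient} and Eq.~\eqref{eq:partialr}, $\mathbf{H}=-\Tr_{-i}(\mathbf{R}_i\rho^{(\mathrm{s})}_{-i})$.
\begin{itemize}
\item $\mathbf{R}_i$ is Hermitian, since it is a real combination of the POVM operators $\mathbf{P}_\omega$ in Eq.~\eqref{eq:reward-op-main}.
\item $\rho^{(\mathrm{s})}_{-i}$ is Hermitian and acts as the identity on subsystem $i$.
\item Hence $\big(\Tr_{-i}(\mathbf{R}_i\rho^{(\mathrm{s})}_{-i})\big)^\dagger=\Tr_{-i}(\rho^{(\mathrm{s})}_{-i}\mathbf{R}_i)$.
\item The partial trace over the other subsystems is cyclic for operators supported only on those subsystems. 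Checking this index-wise with Eq.~\eqref{eq:trace_all}, the right-hand side equals $\Tr_{-i}(\mathbf{R}_i\rho^{(\mathrm{s})}_{-i})$.
\end{itemize}
So $\mathbf{H}$ is Hermitian. Then $e^{-\mathbf{H}/T}$ is positive definite, $Z=\Tr(e^{-\mathbf{H}/T})>0$, and $\rho_T^{(i)}$ is a full-rank density matrix with $\log\rho_T^{(i)}=-\mathbf{H}/T-(\log Z)\mathbf{I}^{(i)}$.

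\textbf{Step 2: rewrite the objective.} For any feasible $\sigma$ (so $\Tr\sigma=1$), substitute $\mathbf{H}=-T\log\rho_T^{(i)}-T(\log Z)\mathbf{I}^{(i)}$ into $F(\sigma)=\Tr(\mathbf{H}\sigma)+T\Tr(\sigma\log\sigma)$. This gives
\begin{align}
F(\sigma)=T\,\Tr\!\big(\sigma(\log\sigma-\log\rho_T^{(i)})\big)-T\log Z
=T\,D\big(\sigma\,\|\,\rho_T^{(i)}\big)-T\log Z,
\end{align}
where $D$ is the Umegaki relative entropy. It is finite for every $\sigma$ because $\rho_T^{(i)}$ has full support. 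Rank-deficient $\sigma$ are handled with the usual convention $0\log 0=0$ on the kernel.

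\textbf{Step 3: conclude with Klein's inequality.} Klein's inequality states $D(\sigma\|\rho)\ge 0$, with equality iff $\sigma=\rho$. Applying it, $F(\sigma)\ge -T\log Z$, with equality iff $\sigma=\rho_T^{(i)}$. So the minimum is attained and the minimizer \eqref{eq:MMWUA} is unique.

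The only step with real content is Klein's inequality. If it cannot be cited, the plan is to prove it as follows:
\begin{itemize}
\item Write $\sigma=\sum_k s_k\ket{u_k}\bra{u_k}$ and $\rho=\sum_l t_l\ket{v_l}\bra{v_l}$ in their eigenbases.
\item Express $D$ as $\sum_{k,l}|\braket{u_k|v_l}|^2\,(s_k\log s_k-s_k\log t_l)$.
\item Apply $s\log s-s\log t\ge s-t$, which follows from strict convexity of $x\log x$.
\item Use that the weights $|\braket{u_k|v_l}|^2$ form a doubly stochastic matrix, so the sum is $\ge \sum_k s_k-\sum_l t_l=0$.
\item Trace the equality case back through the strict inequality to get $\sigma=\rho$.
\end{itemize}
As a cross-check, the first-order conditions $\mathbf{H}+T(\log\sigma+\mathbf{I}^{(i)})-\lambda\mathbf{I}^{(i)}=0$ give the same Gibbs form, and strict convexity of $-S$ on density matrices gives uniqueness independently.
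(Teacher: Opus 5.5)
Your proposal is correct and is essentially the paper's proof: the paper also establishes $F_T(\sigma)-F_T(\rho_T^{(i)})=T\,D(\sigma\Vert\rho_T^{(i)})$ and invokes Klein's inequality for global optimality and uniqueness, differing only in that it first derives the Gibbs candidate from a Lagrange-multiplier stationarity condition, which, as you note, is not needed. Your explicit check that $\mathbf{H}^{(i)}=-\Tr_{-i}(\mathbf{R}_i\rho^{(\mathrm{s})}_{-i})$ is Hermitian (via cyclicity of $\Tr_{-i}$ against an operator acting as the identity on subsystem $i$), and your sketch of Klein's inequality, supply details the paper simply asserts or cites.
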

\textit{proof:}	See the proof in Appendix~\ref{proof:solutiontypeA}.

The density matrix in Eq.~\eqref{eq:MMWUA} is positive
definite and has unit trace, satisfying the density-matrix
conditions in Eq.~\eqref{eq:dm-conditions}.
For a fixed temperature, requiring every player's strategy
to satisfy Eq.~\eqref{eq:MMWUA} defines a coupled system
of fixed-point equations.

We use these entropy-regularized responses to construct
an iterative method.
Starting from a positive initial temperature, we gradually
decrease the temperature after each iteration.
A larger temperature gives greater weight to entropy,
whereas decreasing the temperature places increasing
emphasis on minimizing the expected effective energy.

\textbf{Matrix Exponential Fixed-Point Iteration with Annealing (MEFPIA) algorithm:}
The MEFPIA algorithm is given by the following iteration rule:
\begin{align}
	\rho^{(i)[k+1]} = \mathbf{M}^{(i)[k]}/\Tr(\mathbf{M}^{(i)[k]}) \quad \forall i \in \mathcal{N},
\end{align}
where $\mathbf{M}^{(i)[k]} = \exp\left(-\mathbf{H}^{(i)}(\rho_{-i}^{(\mathrm{s})[k]}) / T_k \right)$,  
and $T_k = T_0 F(k)$ starts at an initial temperature $T_0$ and gradually decreases to zero over time according to the cooling function $F(k)$, with the initial condition $F(0) = 1$.

\begin{algorithm}[t]
	\caption{\label{alg:al1}
		MEFPIA algorithm for EGW quantum game}
	\begin{algorithmic}[1]
		\Require $T_0$: initial temperature, $\alpha$: cooling rate, $\epsilon$: convergence tolerance
		\State Initialize $\rho^{(i)[0]}$ for each player
		\Repeat{($k=0,1,\cdots$: each step)}
		\For{every player $i \in \mathcal{N}$}
		\State $\mathbf{M}^{(i)[k]} \gets \exp(-\mathbf{H}^{(i)} (\rho^{(s)[k]}_{-i})/T_k)$
		\State $\rho^{(i)[k+1]} \gets \mathbf{M}^{(i)[k]}/\Tr(\mathbf{M}^{(i)[k]})$
		\EndFor
		% \State $T_{k+1} \gets \alpha T_k$
		\State Update temperature $T_{k+1}$
		\Until{$\left|\left|\rho^{(i)[k+1]}-\rho^{(i)[k]}\right|\right|\le \epsilon$ for all players}
	\end{algorithmic}
\end{algorithm}
Algorithm~\ref{alg:al1} presents the steps of MEFPIA.
At a fixed positive temperature, a joint fixed point
of Eq.~\eqref{eq:MMWUA} consists of mutually
entropy-regularized best responses.

If the strategy iterates converge to a profile
$\rho^{(\mathrm{s})*}$ while $T_k\to0$, continuity of
the effective Hamiltonians and the vanishing entropy
regularization imply
\begin{align}
	\rho^{(i)*}
	\in
	\underset{\substack{\sigma^{(i)}\succeq0\\
			\Tr(\sigma^{(i)})=1}}{\arg\min}
	\Tr\!\left(
	\mathbf{H}^{(i)}(\rho_{-i}^{(\mathrm{s})*})\sigma^{(i)}
	\right),
	\qquad \forall i\in\mathcal{N}.
\end{align}
Each limiting strategy is therefore a best response to
the limiting strategies of the other players, so the
resulting profile is a Nash equilibrium.

We refer to the gradual decrease of the regularization parameter $T$ as temperature annealing.
A higher temperature gives greater weight to entropy, while a lower temperature places greater emphasis on payoff maximization.
The cooling schedule controls how quickly this balance changes during the iterations.
There are several common cooling functions as described in the literature~\cite{hajek1988cooling,kirkpatrick1983optimization,nourani1998comparison,karabin2020simulated,aarts1989simulated}. Examples are as follows.
\begin{enumerate}
	\item Geometric cooling: $F(k) = \alpha^{k}$, where $\alpha \in (0,1)$.
	\item Logarithmic cooling: $F(k) = {\log \alpha}/{\log(k + \alpha)}$, where $\alpha > 1$.
	\item  Linear cooling over a finite iteration horizon: 	$F(k)=1-\alpha k$, where $0\leq k\leq k_{\max}$ and
	$0<\alpha<1/k_{\max}$, ensuring $T_k>0$ throughout the computation.
	\item Exponential decay: $F(k) = \exp(-\alpha k)$, where $\alpha > 0$.
	\item Adaptive cooling: $F(k) = (1 - \alpha_k) F(k-1) $, where $\alpha_k \in (0,1)$ is adaptively updated during iterations based on optimization feedback.
\end{enumerate}
The parameter $\alpha$, also called the cooling rate, should be fine-tuned to control the cooling speed.
For example, the exponential cooling function may cool too quickly if $\alpha$ is too large. However, the logarithmic cooling function may result in slower cooling, which could lead to slower convergence compared to geometric cooling. 

For a fixed effective Hamiltonian, the Gibbs response concentrates on its minimum-eigenvalue eigenspace as
$T\to0$. If the smallest eigenvalue is nondegenerate, the limiting response is pure and if it is degenerate,
the limiting response can be mixed.
This result is in contradiction with the findings of Lotidis \etal~\cite{lotidis2024payoff}, where all strategies were shown to converge to pure density matrices.
Lotidis \etal used projective outcomes in their experiment, where the outcomes are orthogonal to each other. 
In Experiment 2 (Section~\ref{sec:experiment2}), we used a general POVM set with non-orthogonal outcomes. Our results showed that in this setting the optimal strategy is no longer a pure state.

\section{experimental analysis}
\label{sec:expriments}
In this experimental section, we study and compare the convergence of MMWU and MEFPIA in different setups of the EGW quantum game.
To improve readability, we aggregate the outcome rewards and utilize vector notation
$\boldsymbol{r}_i = (R_i(\omega_{0}), R_i(\omega_{1}), \ldots,R_i(\omega_{m}))$ in Eq.~\eqref{eq:reward-op-main}.
The $R_i(\omega_{j})$ is the outcome reward corresponding to the $j$-th measurement outcome of player $i$, and $m$ is the total number of outcomes. 
In these experiments, we use the geometric cooling function $F(k) = \alpha^{k}$ during the cooling process. Since small values of $\alpha$ can cause the system to cool too quickly, we choose $\alpha$ close to one, but slightly less than one during the learning steps.

To study the convergence, we use the relative error, defined as
\begin{align}
	\varepsilon_{\text{rel}} = \frac{| r - r_0 |}{| r_0 |},
\end{align}
where $r$ is the expected reward at each iteration, and $r_0$ is the expected reward at the fixed-point.
In these experiments, we used 100 random initial strategies for all players and averaged the relative error over the training iterations. We also plot the variance of the relative error in lighter colors in the figures. To produce meaningful logarithmic scaling, we only consider the upper bound of the variance, which gives a more apparent and understandable appearance.

In the following, we study two different experiments. 
In experiment~1, we use orthogonal outcomes (projective outcomes) and apply the reward of the classical Prisoner's Dilemma in a quantum format. We divide experiment~1 into three mini-experiments, each using a different set of outcomes. In the first mini-experiment, we use simple separable outcomes that map to the classical Prisoner's Dilemma. In the second mini-experiment, we use more complex separable outcomes, and in the last mini-experiment, we use entangled outcomes.
In experiment~2, we use a larger strategy space and a larger set of outcomes in a three-player game.

In all experiments, we compare the performance of MMWU and MEFPIA and show that the MEFPIA algorithm converges faster with greater accuracy.
In the experiments, $\gamma$ denotes the learning rate of MMWU~\cite{lotidis2024payoff}.
We also investigate the effect of the cooling rate on the convergence speed of MEFPIA.

\subsection{Experiment 1}
In this experiment, we use an extension version of the classical Prisoner's Dilemma to the EGW quantum game framework. In the classical version of the Prisoner's Dilemma, two players can either cooperate or defect (C or D). They do not know the strategy of the other player. If both players cooperate to each other, they each receive a payoff of 3. If both defect, they each receive a payoff of 1. If one defects while the other cooperates, the defector receives a payoff of 5, and the cooperator receives 0. 
The Nash equilibrium in this game occurs when both players defect, as this is the best response to the opponent's strategy. Although mutual cooperation, which corresponds to a payoff of 3 for both players, would lead to a better outcome for both, the incentive to defect makes mutual defection, corresponding to a payoff of 1 for each player, the dominant strategy in this game. 
The incentive to defect makes the equilibrium at mutual defection with payoff 1 rather than mutual cooperation payoff 3, even though the latter would be better for both players.
In classical scenarios, players can use a mixed strategy, which means that they choose randomly between cooperation and defection based on predefined probabilities.

The Prisoner's Dilemma in a EGW quantum game is defined as  $\mathcal{Q}_{\text{EGW}}^{\text{PD}} = \langle \mathcal{N}=\{1,2\}, \mathcal{H}=\mathcal{H}^{(1)} \otimes \mathcal{H}^{(2)}, \mathcal{S}, r \rangle$ with  $d_i=2$. The outcome reward vector $r_i$ corresponds to the set of outcomes and the reward of each outcome.

To provide an overview of this quantum game and establish a simple mapping between the classical and quantum versions of the Prisoner's Dilemma, we conduct three mini-experiments with different outcome sets in this study. In these mini-experiments, we use an outcome reward vector for each player, similar to the classical Prisoner's Dilemma, as follows
\begin{align}
	\label{eq:reward1-exp}
	\mathbf{r}_1 = (3,0,5,1), \quad \mathbf{r}_2 = (3,5,0,1),
\end{align}	
which are related to the outcome set $\Omega = \{ \omega_1, \cdots, \omega_4\}$.

\textbf{Mini-experiment 1:}
In this mini-experiment, we look at a one-to-one mapping between classical and quantum Prisoner's Dilemma.
In comparison with the classical Prisoner's Dilemma with mixed strategies, $\ket{0}$ represents cooperation, and $\ket{1}$ represents defection.
In the density matrix formalism, we represent the cooperative strategy of the player $i$ as ${\rho^{(i)} = \ket{0}\bra{0}}$, and the defect strategy of player $i$ as ${\rho^{(i)} = \ket{1}\bra{1}}$. The strategy profile in which the defect is chosen by both players is represented by $\rho^{(\text{s})} = \ket{11}\bra{11}$.

The outcomes and their classical counterparts are presented as follows
\begin{align}
	\omega_1 \equiv CC , &\quad
	\omega_2 \equiv CD , \nonumber \\
	\omega_3 \equiv DC , &\quad
	\omega_4 \equiv DD ,
\end{align}
where $C$ denotes cooperation and $D$ denotes defection. $CD$ means that the first player cooperates, while the second player defects.
The Nash equilibrium corresponds to a payoff of 1 for each player when both players use the defect strategy. 
For the above outcomes, the outcome operators $\mathbf{P}_\omega$ are defined as
\begin{align}
	\mathbf{P}_{\omega_{1}} = \ket{00}\bra{00}, &\quad
	\mathbf{P}_{\omega_{2}} = \ket{01}\bra{01}, \nonumber \\
	\mathbf{P}_{\omega_{3}} = \ket{10}\bra{10}, &\quad
	\mathbf{P}_{\omega_{4}} = \ket{11}\bra{11}.
\end{align}
Our experiments show that for both the MMWU and MEFPIA algorithms, the strategies and payoffs converge to this equilibrium point, which is in agreement with the classical Prisoner's Dilemma.

\begin{figure}[!tbp]
	\includegraphics[width=0.99\linewidth]{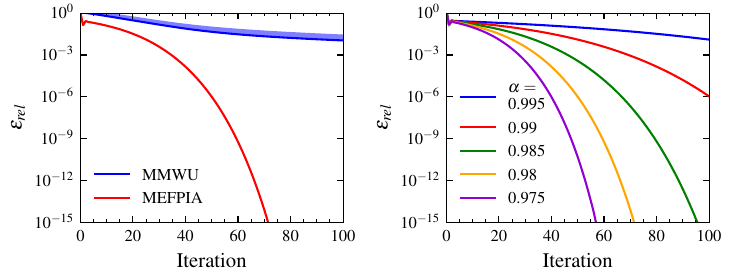}
	\caption{Relative error for mini-experiment 1 (left)	MMWU ($\gamma=0.05$) and MEFPIA ($T_0=1, \alpha = 0.98$) (right) MEFPIA with $T_0=1$ with different cooling rates.  }
	\label{fig:PrisonersDilemma}
\end{figure}
In Fig.~\ref{fig:PrisonersDilemma}-(left), we compare the convergence of MEFPIA and MMWU to the optimal strategies. The MEFPIA converges to the optimal solution much faster than the MMWU. The relative error for MMWU is greater than $10^{-3}$, while the relative error for our algorithm, MEFPIA, can reach machine precision ($10^{-16}$) in fewer than 100 iterations.
Both algorithms converge to the same optimal strategy profile and receive the same payoff at the convergence point, which is equivalent to both players defecting in the classical Prisoner's Dilemma.
In Fig.~\ref{fig:PrisonersDilemma}-(right), we study the effect of the cooling rate $\alpha \in (0,1)$ on MEFPIA. A higher cooling rate leads to slower cooling, causing the algorithm to converge after more iterations, whereas a lower cooling rate causes faster cooling. In many circumstances, the system should have enough time to explore the strategy space, as cooling too quickly may prevent sufficient exploration. Therefore, we tune the cooling rate to an optimal value, avoiding both extremes of being too small or too large.

\textbf{Mini-experiment 2:}
In this mini-experiment, we use the separate outcome set. Consider that each private set of outcomes contains two members as $\Omega^{(i)} = \{ \omega_1^{(i)}, \omega_2^{(i)} \}$, with the corresponding outcome operators $\mathbf{P}_\omega$ given by
\begin{align}
	\mathbf{P}_{\omega_{1}^{(i)}} = \frac{1}{2}\big(\ket{0}+i\ket{1}\big)\big(\bra{0}-i\bra{1}\big), \nonumber \\
	\mathbf{P}_{\omega_{2}^{(i)}} = \frac{1}{2}\big(\ket{0}-i\ket{1}\big)\big(\bra{0}+i\bra{1}\big).
\end{align}
These outcome operators correspond to each player $i$, and the private outcome set is the same for both players.
As a result, there are four outcomes in the separable outcome set.

\begin{figure}
	\centering
	\includegraphics[width=0.99\linewidth]{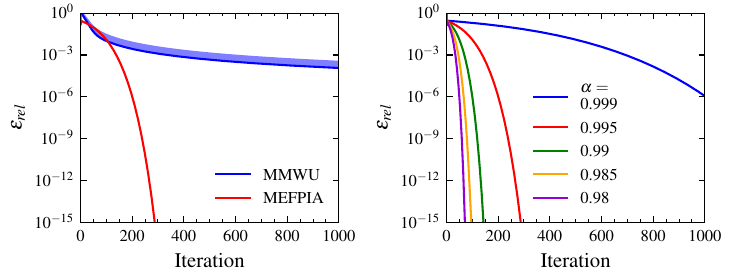}
	\caption{Relative error for mini-experiment 2.
		(left)	MMWU ($\gamma=0.05$) and MEFPIA ($T_0=1, \alpha = 0.995$) (right) MEFPIA with $T_0=1$ with different cooling rates.}
	\label{fig:PrisonersDilemma-sep-ind}
\end{figure}
Fig.~\ref{fig:PrisonersDilemma-sep-ind}-(left) shows the performance of MEFPIA and MMWU. The MEFPIA converges to the optimal solution significantly faster than the MMWU. 
In Fig.~\ref{fig:PrisonersDilemma-sep-ind}-(right), we study the convergence of the MEFPIA algorithm for different values of the cooling rate. 

In Fig.~\ref{fig:PrisonersDilemma-sep-bloch}, we show the players' strategies (density matrices) at each iteration on the Bloch sphere.
To convert a $2 \times 2$ density matrix $\rho$ to the Bloch sphere representation~\cite{nielsen2010quantum}.
The pure states correspond to points on the surface, and the mixed states lie inside the sphere.

\begin{figure}[t]
	\centering
	\includegraphics[width=0.99\linewidth]{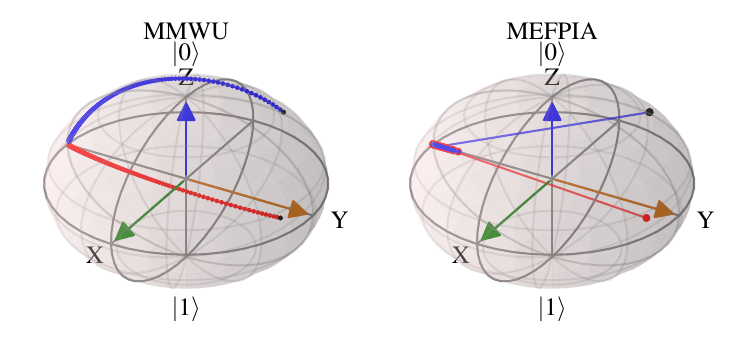}
	\caption{Bloch sphere for Prisoner's Dilemma quantum game with non-entangled outcome. Player 1 is represented by blue color, and player 2 by red color.
		(left)	MMWU ($\gamma=0.05$) and (right) MEFPIA ($T_0=1, \alpha = 0.99$)}
	\label{fig:PrisonersDilemma-sep-bloch}
\end{figure}
In Fig.~\ref{fig:PrisonersDilemma-sep-bloch}, we start training at the black points for both players, using both MEFPIA and MMWU. 
During training, the strategies converge to the optimal strategy. Fig.~\ref{fig:PrisonersDilemma-sep-bloch}-(left) corresponds to the MMWU, which smoothly converges to the convergence point. On the other hand, in Fig.~\ref{fig:PrisonersDilemma-sep-bloch}-(right), the MEFPIA shows a jump in the initial steps and then focuses on finding the solution around the optimal point. This behavior of both algorithms is related to the entropy used in their development.

\textbf{Mini-experiment 3:}
In this mini-experiment, we use a set of outcomes that includes entangled states, which generates the entangled reward operator. Consider the outcome operators as 
\begin{align}
	\label{eq:bell-exp}
	\mathbf{P}_{\omega_{1}} &= \frac{1}{2} \big(\ket{00}+i\ket{11}\big)\big(\bra{00}-i\bra{11}\big), \nonumber \\
	\mathbf{P}_{\omega_{2}} &= \frac{1}{2} \big(\ket{01}-i\ket{10}\big)\big(\bra{01}+i\bra{10}\big), \nonumber \\ 
	\mathbf{P}_{\omega_{3}} &= \frac{1}{2} \big(\ket{10}-i\ket{01}\big)\big(\bra{10}+i\bra{01}\big), \nonumber \\ 
	\mathbf{P}_{\omega_{4}} &= \frac{1}{2} \big(\ket{11}+i\ket{00}\big)\big(\bra{11}-i\bra{00}\big). 
\end{align}
Since there is entanglement between the outcome subsystems and the strategy profile is separable, no strategy profile can capture each $\mathbf{P}_{\omega_{j}}$ with a probability of $1$. As a result, it cannot cover the entire reward space.
The maximum probability that a strategy profile can achieve for these entangled outcomes is $0.5$. 
In this experiment, both MEFPIA and MMWU converge to the same strategy profile with a payoff $2.25$.

\begin{figure}[b]
	\centering
	\includegraphics[width=0.99\linewidth]{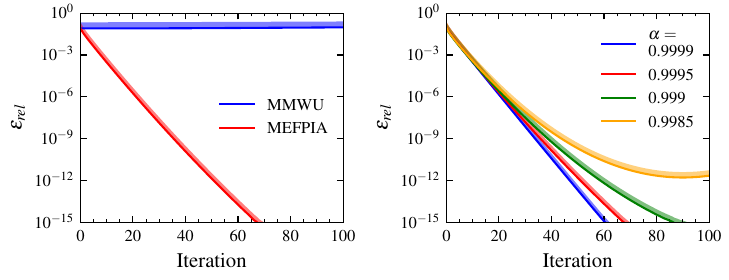}
	\caption{Relative error for mini-experiment 3 with entangled outcomes. (left)	MMWU ($\gamma=0.01$) and MEFPIA ($T_0=3, \alpha = 0.9995$) (right) MEFPIA with $T_0=3$ with different values for cooling rate.}
	\label{fig:PrisonersDilemma-ent}
\end{figure}
In Fig.~\ref{fig:PrisonersDilemma-ent}-(left), the convergence of both MEFPIA and MMWU is studied. As in the other experiments, MEFPIA performs better than MMWU in terms of convergence. In Fig.~\ref{fig:PrisonersDilemma-ent}-(right), the convergence of MEFPIA is studied for different values of the cooling rate. Due to the nature of the entangled outcomes, the cooling rates differ significantly from those used in the previous mini-experiments.

In Experiment 1, all outcomes are orthogonal to each other (projective outcomes), and the optimal strategies converged to the pure density matrices with $\Tr(\rho^{(i)2}) = 1$.

\subsection{Experiment 2:}
\label{sec:experiment2}
In the previous experiment, the number and size of the outcome spaces were small, and all outcomes were orthogonal to each other (projective outcomes). However, a general set of outcomes in a POVM does not require orthogonal outcomes. To study the performance of MEFPIA and MMWU in a more complex scenario, in this experiment, we use non-orthogonal and separable outcomes with 3 players. Additionally, the dimension of the subsystem is $d_i = 5$, and the number of private outcomes is $m_i = 3$ for each player $i$. The final set of outcomes for the entire system consists of $3^3 = 27$ outcomes, with dimension $5^3 = 125$. The detailed information regarding the set of outcomes and the reward vector is discussed in the Appendix~\ref{append:add-exprim}.

\begin{figure}
	\centering
	\includegraphics[width=0.99\linewidth]{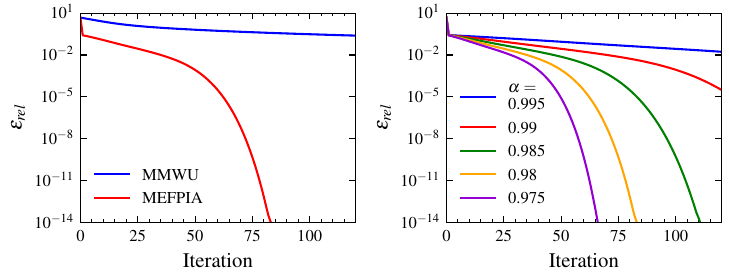}
	\caption{Relative error for Experiment 2 with separable outcomes and large system. 
		(left)	MMWU ($\gamma=0.01$) and MEFPIA ($T_0=1, \alpha = 0.98$) (right) MEFPIA with $T_0=1$ with different cooling rates.}
	\label{fig:experiment2}
\end{figure}
In Fig.~\ref{fig:experiment2}-(left), we compare the convergence of MMWU and MEFPIA. Fig.~\ref{fig:experiment2}-(right) shows the effect of the cooling rate on the performance of MEFPIA. 
In this experiment, the final strategies are mixed density matrices with $\Tr[(\rho^{(i)})^2]\approx 1/2$.

\textbf{Discussion on Overflow Limits at Low Temperatures:}
For a large system, such as the model used in this experiment, the calculation of 
$ \mathbf{M}^{(i)[k]}$ 
and $\rho^{(i)[k+1]} = \mathbf{M}^{(i)[k]}/\Tr(\mathbf{M}^{(i)[k]})$
at very low temperatures becomes challenging, since the values of $\mathbf{M}^{(i)[k]}$ can exceed the overflow limit of the code (Algorithm~\ref{alg:al1}). In these scenarios, we first calculate the eigenvalues and eigenvectors of $\mathbf{A} =-\mathbf{H}^{(i)} (\rho^{(s)[k]}_{-i})/T_k$ before calculating the exponential. Consider $\Lambda$ as a diagonal matrix of eigenvalues $(\lambda_1, \lambda_2, \dots, \lambda_{d_i})$, and $\mathbf{V}$ as the matrix of eigenvectors of $\mathbf{A} = \mathbf{V} \Lambda \mathbf{V}^\dagger$.
The exponential of $\mathbf{A}$ can be calculated as
$\exp(\mathbf{A}) = \mathbf{V}\exp(\Lambda) \mathbf{V}^\dagger$. 
Let $\lambda_{\max}=\max_{1\leq \ell\leq d_i}\lambda_\ell$ 	be the largest eigenvalue of $\mathbf{A}$. 
Then
\begin{align}
	e^{\mathbf{A}} = \mathbf{V}e^{\Lambda} \mathbf{V}^\dagger = e^{\lambda_{\text{max}}} \mathbf{V}e^{\Lambda-\lambda_{\text{max}}\mathbf{I}} \mathbf{V}^\dagger.
\end{align}
The term $e^{\Lambda - \lambda_{\text{max}} \mathbf{I}}$ can be easily calculated and remains within the overflow limit. Similarly, the trace of the exponential operator can be written as
\begin{align}
	\Tr(\exp(\mathbf{A})) = e^{\lambda_{\text{max}}} \, \Tr\left( e^{\Lambda - \lambda_{\text{max}} \mathbf{I}} \right).
\end{align}
Since the factor $e^{\lambda_{\text{max}}}$ cancels both in the numerator and in the denominator, we can easily calculate the next step of the algorithm~\ref{alg:al1} for $\rho^{(i)[k+1]}$ without encountering any problems. It is worth noting that calculating eigenvalues and eigenvectors can be time consuming. Therefore, we use this method only at very low temperatures, when necessary.

\textbf{Reproducibility Statement:}The experimental setup, implementation steps, algorithmic
procedures, and parameter choices are described in the main
text and appendices to support reproducibility.
ChatGPT and QuillBot were used to improve the language
and readability of this paper.

\section{Conclusions}
\label{sec:conclud}
In this paper, we simplified payoff and gradient calculations in EGW quantum games using tensor-contraction
expressions. 
These expressions avoid explicitly constructing the joint density matrix and multiplying it by the reward
operators, reducing the computational and memory costs associated with direct evaluation using full matrix products.
We then proposed the MEFPIA algorithm for equilibrium search, combining entropy-regularized best responses with
a gradually decreasing temperature. 
We compared MEFPIA with MMWU under different measurement configurations, including separable and entangled
measurement outcomes. 
For the tested instances and parameter settings, both algorithms approached the same strategy profiles and payoffs, while MEFPIA achieved lower relative payoff error in fewer iterations.
In future work, we will investigate properties of MEFPIA and evaluate its
performance in larger quantum games with more general measurement structures. We also plan to investigate
practical applications of quantum games in multi-agent decision-making.

\appendix
\section{}

\subsection{Proofs}

\label{proof:solutiontypeA}
\textbf{Proof of Theorem~\ref{theorem:solutiontypeA}:}
For the player $i$, fix the other player's strategies
$\rho_{-i}^{(\mathrm{s})}$, and a temperature $T>0$.
For brevity, write
$\mathbf{H}=\mathbf{H}^{(i)}(\rho_{-i}^{(\mathrm{s})})$
and $\sigma=\sigma^{(i)}$.
The optimization problem in
Eq.~\eqref{eq:regularized-response} minimizes
\begin{align}
	F_T(\sigma)
	=
	\Tr(\mathbf{H}\sigma)
	+T\Tr(\sigma\log\sigma),
\end{align}
subject to $\sigma\succeq0$ and $\Tr(\sigma)=1$,
with the condition $0\log0=0$.
Introducing a Lagrange multiplier $\mu\in\mathbb{R}$
for the unit-trace constraint gives
\begin{align}
	\mathcal{L}(\sigma,\mu)
	&=
	\Tr(\mathbf{H}\sigma)
	+T\Tr(\sigma\log\sigma)
	\nonumber\\
	&\quad+\mu\big(\Tr(\sigma)-1\big).
\end{align}
We first find a positive-definite candidate
and then verify its global optimality over all feasible
density matrices.
For $\sigma\succ0$, the first variation in a Hermitian
perturbations of $\delta\sigma$ is
\begin{align}
	\delta\mathcal{L}
	=
	\Tr\!\left[
	\left(
	\mathbf{H}
	+T(\log\sigma+\mathbf{I}^{(i)})
	+\mu\mathbf{I}^{(i)}
	\right)\delta\sigma
	\right].
\end{align}
Setting this variation to zero for every Hermitian
$\delta\sigma$ yields
\begin{align}
	\mathbf{H}
	+T(\log\sigma+\mathbf{I}^{(i)})
	+\mu\mathbf{I}^{(i)}
	=0.
\end{align}
Therefore,
\begin{align}
	\sigma
	&=
	\exp\!\left(
	-\frac{\mathbf{H}}{T}
	-\left(1+\frac{\mu}{T}\right)\mathbf{I}^{(i)}
	\right)
	\nonumber\\
	&=
	e^{-1-\mu/T}\exp(-\mathbf{H}/T).
\end{align}
The second equality follows from the
Baker-Campbell-Hausdorff formula, since
$[\mathbf{H},\mathbf{I}^{(i)}]=0$, together with
$\exp(c\mathbf{I}^{(i)})=e^c\mathbf{I}^{(i)}$
for any scalar $c$.
The unit-trace constraint gives
\begin{align}
	e^{-1-\mu/T}
	=
	\frac{1}{Z_T},
	\qquad
	Z_T=\Tr\!\left(\exp(-\mathbf{H}/T)\right).
\end{align}
Hence, the stationary candidate is
\begin{align}
	\rho_T^{(i)}
	=
	\frac{\exp(-\mathbf{H}/T)}{Z_T}.
\end{align}
Since $\mathbf{H}$ is Hermitian and $T>0$,
$\rho_T^{(i)}$ is positive definite and has unit trace.
This result is an application of the Gibbs variational
principle, which characterizes the Gibbs state as the
minimizer of the entropy-regularized energy
functional~\cite{shi2020variational}.

To establish global optimality and uniqueness, 
consider any feasible density matrix $\sigma$ and define the
Umegaki quantum relative entropy~\cite{umegaki1962conditional}
 with respect to $\rho_T^{(i)}$ as
\begin{align}
	D(\sigma\Vert\rho_T^{(i)})
	=
	\Tr(\sigma\log\sigma)
	-\Tr(\sigma\log\rho_T^{(i)}).
\end{align}
This quantity is finite because $\rho_T^{(i)}$ is
positive definite.
Using
\begin{align}
	\log\rho_T^{(i)}
	=
	-\frac{\mathbf{H}}{T}
	-(\log Z_T)\mathbf{I}^{(i)},
\end{align}
we obtain~\cite{brandao2015second}:
\begin{align}
	T D(\sigma\Vert\rho_T^{(i)})
	&=
	T\Tr(\sigma\log\sigma)
	+\Tr(\mathbf{H}\sigma)
	+T\log Z_T
	\nonumber\\
	&=F_T(\sigma)+T\log Z_T.
\end{align}
Setting $\sigma=\rho_T^{(i)}$ gives
$F_T(\rho_T^{(i)})=-T\log Z_i$.
Therefore,
\begin{align}
	F_T(\sigma)-F_T(\rho_T^{(i)})
	=
	T D(\sigma\Vert\rho_T^{(i)})
	\geq 0.
\end{align}
By the nonnegativity of Umegaki quantum relative entropy
(Klein's inequality)~\cite{watrous2018theory}, equality holds if and only if
$\sigma=\rho_T^{(i)}$.
Thus, $\rho_T^{(i)}$ is the unique minimizer of
Eq.~\eqref{eq:regularized-response}, proving
Eq.~\eqref{eq:MMWUA}.

\subsection{Generating POVM Elements}
\label{append:povm_generate}

In the second experiment, we generate a POVM with $m$
outcomes for a system of dimension $d$.
The outcome set is $\Omega=\{\omega_j\}_{j=0}^{m-1}$.
The corresponding POVM elements must satisfy positivity
and completeness~\cite{watrous2018theory}:
\begin{align}
	\mathbf{P}_{\omega_j}\succeq 0,
	\qquad
	\sum_{j=0}^{m-1}\mathbf{P}_{\omega_j}=\mathbf{I}.
\end{align}
We construct these elements using the following steps:
\begin{enumerate}
	\item Define the orthonormal basis
	$\{\mathbf{v}_k\}_{k=0}^{d-1}$:
	Select $d$ orthonormal vectors in $\mathbb{C}^d$
	and use them as the columns of a unitary matrix
	$\mathbf{U}\in\mathbb{C}^{d\times d}$.
	
	\item Define the diagonal matrices:
	For each outcome $j$, select a vector
	$\mathbf{u}_j=(u_{j_0},\ldots,u_{j_{d-1}})$
	with real, nonnegative entries satisfying
	\begin{align}
		u_{j_k}\geq 0,
		\qquad
		\sum_{j=0}^{m-1}u_{j_k}=1,
		\quad k=0,\ldots,d-1.
	\end{align}
	Set $\mathbf{D}_j=\operatorname{diag}(\mathbf{u}_j)$.
	
	\item Generate the POVM outcome operators:
	Define
	\begin{align}
		\mathbf{P}_{\omega_j}
		=\mathbf{U}\mathbf{D}_j\mathbf{U}^{\dagger},
		\quad j=0,\ldots,m-1.
	\end{align}
\end{enumerate}

Each $\mathbf{D}_j$ is positive semidefinite, so
$\mathbf{P}_{\omega_j}$ is also positive semidefinite.
Moreover, the normalization of the diagonal entries gives
\begin{align}
	\sum_{j=0}^{m-1}\mathbf{P}_{\omega_j}
	&=
	\mathbf{U}
	\left(\sum_{j=0}^{m-1}\mathbf{D}_j\right)
	\mathbf{U}^{\dagger}
	\nonumber\\
	&=\mathbf{U}\mathbf{I}\mathbf{U}^{\dagger}
	=\mathbf{I}.
\end{align}
Thus, the generated operators form a valid POVM.
Since all elements are diagonal in the same basis, this construction generates commuting POVMs.

\subsection{Additional Information on Experimental Analysis}
\label{append:add-exprim}
In experiment~2 in Section~\ref{sec:experiment2}, we generated the non-orthogonal POVM set of outcomes using the method described in Appendix~\ref{append:povm_generate}. 
We considered a 3-player game where each player $i$ has access to a private set of outcomes $\Omega_i = \{\omega_{1}^{(i)},\omega_2^{(i)},\omega_{3}^{(i)}\}$ for a separable reward operator. For simplicity, we assume that the private set of outcomes is the same for all players. 
We used $\mathbf{U}$ and $\{ \mathbf{u}_j \}$ as 
\begin{subequations}
	\begin{align}
		\mathbf{U} &= \begin{pmatrix}
			\frac{1}{\sqrt{3}} & \frac{1}{\sqrt{3}} & \frac{1}{\sqrt{3}}&0&0 \\
			\frac{1}{\sqrt{2}} & -\frac{1}{\sqrt{2}} & 0 &0&0\\
			\frac{1}{\sqrt{6}} & \frac{1}{\sqrt{6}} & -\sqrt{\frac{2}{3}}&0&0\\
			0&0&0&\frac{1}{\sqrt{2}} & \frac{i}{\sqrt{2}}  \\
			0&0&0&\frac{1}{\sqrt{2}} & -\frac{i}{\sqrt{2}} 
		\end{pmatrix}, \\
		\mathbf{D}_0 &=\text{diag}(0.5,0.50,0.1,0.0,0.0),
		\\
		\mathbf{D}_1 &=\text{diag}(0.5,0.25,0.4,0.5,0.5),
		\\
		\mathbf{D}_2 &=\text{diag}(0  ,0.25,0.5,0.5,0.5).
	\end{align}
\end{subequations}
We used Appendix~\ref{append:povm_generate} to generate the private outcome operators for each player and used them for the separable outcome set of the entire system. Then, we set the reward vector as $\mathbf{r}_i = (1, 2, \dots, 27)$ for all players and applied Eq.~\eqref{eq:reward-op-main} to generate the reward operators.

\bibliography{references}
\bibliographystyle{IEEEtran}

\end{document}